\title{An Environment for Analyzing Space Optimizations in Call-by-Need Functional Languages}
\author{Nils Dallmeyer
\institute{Goethe-University\\Frankfurt am Main}
\email{dallmeyer@ki.cs.uni-frankfurt.de}
\and
 Manfred Schmidt-Schau{\ss}
\institute{Goethe-University\\Frankfurt am Main}
\email{schauss@ki.cs.uni-frankfurt.de}
 }
\begin{document}

\newtheorem{theorem}{Theorem}[section]
\newtheorem{lemma}[theorem]{Lemma} 
\newtheorem{example}[theorem]{Example} 
\newtheorem{definition}[theorem]{Definition} 
\newtheorem{remark}[theorem]{Remark} 
\newtheorem{proposition}[theorem]{Proposition}
\newtheorem{exercise}[theorem]{Exercise}
\newtheorem{corollary}[theorem]{Corollary}
\newtheorem{algo}[theorem]{Algorithm}

\abovedisplayskip3ex plus1ex minus1ex
\newcommand{\env}{\mathit{env}}
\newcommand{\tletrec}{\texttt{letrec}}
\newcommand{\tin}{\texttt{in}}
\newcommand{\tcase}{\texttt{case}}
\newcommand{\tseq}{\texttt{seq}}
\newcommand{\nogc}{\mathit{nogc}}
\newcommand{\wrt}{{w.r.t.\!} }
\newcommand{\eg}{{e.g.\!} }
\newcommand{\ie}{{i.e.\!} }
\newcommand{\letrec}{{\tt letrec}\ }
\newcommand{\lin}{\ {\tt in}\ }
\newcommand{\case}{{\tt case}}
\newcommand{\of}{\ {\tt of}\ }
\newcommand{\True}{{\tt True}}
\newcommand{\False}{{\tt False}}
\newcommand{\Zero}{{\tt Zero}}
\newcommand{\Succ}{{\tt Succ}}
\newcommand{\foldl}{{\tt foldl}}
\newcommand{\foldls}{{\tt foldl'}}
\newcommand{\foldr}{{\tt foldr}}
\newcommand{\reverse}{{\tt reverse}}
\newcommand{\reverselin}{{\tt reverse'}}
\newcommand{\reverselinw}{{\tt reversew}}
\newcommand{\seq}{{\tt seq}\ }
\newcommand{\casepf}{\,\texttt{->}\,}
\newcommand{\ltop}{\text{top}}
\newcommand{\lsub}{\text{sub}}
\newcommand{\lvis}{\text{vis}}
\newcommand{\lnontarg}{\text{nontarg}}
\newcommand{\lvee}{\hspace*{1.25pt}\vee\hspace*{1.25pt}}
\newcommand{\rnoLRP}{\xrightarrow{\LRP}}
\newcommand{\rnortLRP}{\xrightarrow{\LRP,*}}
\newcommand{\rnokLRP}{\xrightarrow{\LRP,k}}
\newcommand{\rnopLRP}{\xrightarrow{\LRP,+}}
\newcommand{\rnogc}{\xrightarrow{nogc}}
\newcommand{\LCSC}{\text{LCSC}}
\newcommand{\RetApp}{\text{\#app}}
\newcommand{\RetSeq}{\text{\#seq}}
\newcommand{\RetCase}{\text{\#case}}
\newcommand{\RetHeap}{\text{\#upd}}
\newcommand{\absm}{Mark 1}
\newcommand{\rln}{\text{rln}}
\newcommand{\rlnall}{\text{rlnall}}
\newcommand{\mln}{\text{mln}}
\newcommand{\mlnall}{\text{mlnall}}
\newcommand{\spmax}{\mathit{spmax}}
\newcommand{\spmaxleq}{\leq_\mathit{maxspace}}
\newcommand{\diverges}{{\uparrow}}
\newcommand{\conv}{{\downarrow}}
\newcommand{\convgc}{{\downarrow}_{nogc}}
\newcommand{\updateChain}{{\sc updateChain}}
\newcommand{\LR}{LR}
\newcommand{\LRP}{LRP}
\newcommand{\LRPgc}{LRPgc}
\newcommand{\LRPi}{\text{LRPi}}
\newcommand{\size}{{\tt size}}
\newcommand{\ignore}[1]{}

\maketitle

\begin{abstract}
We present an implementation of an interpreter LRPi for the call-by-need calculus \LRP{}, based on a variant of Sestoft's abstract machine Mark 1,
extended with an eager garbage collector. 
It is used as a tool for exact space usage analyses as a support for our investigations into  space improvements of call-by-need calculi. 
\end{abstract}

\section{Introduction}
Lazy functional languages like Haskell use call-by-need as evaluation strategy.
This leads to a more declarative way of programming where a specification of the
result is emphasized instead of specifying the sequence of evaluations. This approach allows a lot of correct program transformations
that can potentially be used by a compiler for optimization purposes. 
It would be a very helpful information to know, whether a program transformation decreases time/space  usage 
or in which situations this may occur. 
We will capture this using the notion of improvements. In this paper we  emphasize  space improvements, pursuing     
our long-term research goal to analyze time- and space-improvements for Haskell-like languages. 
The goal of this paper is to put forward further studies on improvements,    
with a main focus on  providing a test environment to support and speed up the analysis of improvements.
 
Previous work on improvements \wrt time usage
(the number of reduction steps), is \eg  \cite{sands91opth,sands95naive,sands95totalcorrectness} 
for call-by-name and  
\cite{imp99,ifl16,ppdp-IB55,schmidt-schauss-sabel-PPDP:2015} for call-by-need.
There seem to be only a few studies on space improvements, by Gustavsson and Sands \cite{gustavssonSands99,gustavssonSands01,gustavssonDiss01}. 
Their notion of (strong) space improvement is mainly the same as ours, however, they use an untyped  (restricted) language. We will investigate a typed language
since typing enables more transformations to be improvements, for example $\texttt{map id xs}$ is equivalent to $\texttt{xs}$ under typing,
 but not in untyped calculi.
  The reason is that also contexts must be typed and thus 
only tests that are (type-)compatible with the intention of the program are used for characterizing improvements. 

We will use the lazy typed functional core language \LRP{} \cite{schmidt-schauss-sabel-WPTE:14} for defining and analyzing space improvements.
\LRP{}  has a rich syntax including \tletrec, data constructors and \tcase-expressions, Haskell's \tseq-operator,
and polymorphic typing, modeling Haskell's core language.  Evaluation in \LRP{} is defined by a rewriting semantics.
An improvement \wrt a measure  is  a locally applicable (and   correct) transformation that transforms an expression $e_1$ to    
$e_2$, such that $e_2$ is at least as good as $e_1$ \wrt the chosen measure in all contexts. 
Our correctness notion is contextual equivalence, which means that $e_1, e_2$ behave identically \wrt termination in all contexts. 

Our approach is to use an abstract Sestoft-machine (see also \cite{gustavssonSands99}) as interpreter in order to have a realistic model for the 
resource consumption at runtime. 
Since space is an issue, in particular the maximally used space during an evaluation,  a detection of dynamically generated garbage is required, 
which leads to the implementation of an (eager) garbage collector. This is nontrivial, since letrec permits cyclic references. 
Furthermore,  indirections $x = y$ in \tletrec-environments turned out to lead to more space consumption
in the Sestoft machine     
than in the calculus,  
which is defeated by  removing  indirections at compile time as well as adapting the abstract machine.
Removing indirections can be done efficiently (see Section \ref{para-indirections}). 
The interpreter implementation is also shown to exactly count the maximal space usage for machine expressions (see Theorem \ref{thm:adequacy}).
Our  specific space analyses can exhibit in examples the  reason for unexpected
space increases, can refute transformations being space improvements, and can also give hints on the complexity of 
evaluations.

{\bf Outline }
In   Section ~\ref{sec:lrp} \LRP{} is introduced. In Section~\ref{sec:lrpgc} \LRP{} is extended with garbage collection.
Space improvements are explained in Section~\ref{sec:spimp}. In Section~\ref{sec:lrpint}  the \LRP{}-interpreter 
is described. The analyses and some results are in Section~\ref{sec:ana}. We conclude in Section~\ref{sec:conc}.

\section{Polymorphically Typed Lazy Lambda Calculus} \label{sec:lrp}
We recall  the Polymorphically Typed Lazy Lambda Calculus (\LRP) \cite{ppdp-IB55} as language. 
We also motivate and 
introduce 
several necessary extensions for supporting realistic space analyses.

\LRP{} \cite{schmidt-schauss-sabel-WPTE:14} is \LR{} (\eg see \cite{safenoecker}) extended with types. I.e., \LRP{} is an extension of the lambda calculus by polymorphic types, 
recursive \texttt{letrec}-expressions, \texttt{case}-expressions, \texttt{seq}-expressions, data constructors, type abstractions $\Lambda a.s$ 
to express polymorphic functions and  type applications $(s\ \tau)$ for type instantiations. The syntax of expressions and types of \LRP{} is defined in Fig.~\ref{fig:lrp}.

\begin{figure*}[!htb]
 \fbox{
\begin{minipage}{.98\textwidth}
\noindent{\bf Syntax of expressions and types:}
Let type variables $a,a_i\in\mathit{TVar}$ and term variables $x_,x_i\in\mathit{Var}$. 
Every type constructor $K$ has an arity $ar(K)\geq 0$ and a finite set $D_K$ of data constructors $c_{K,i}\in D_K$ 
with an arity $ar(c_{K,i})\geq 0$. \\[2mm]
\textbf{Types} $Typ$ and polymorphic types $\mathit{PTyp}$ are defined as follows:\\
$\begin{array}{lll}
\tau\in\mathit{Typ}  & ::= & a \mid (\tau_1 \rightarrow \tau_2) \mid (K\ \tau_1\ \dots\ \tau_{ar(K)})\\
\rho\in\mathit{PTyp} & ::= & \tau \mid \forall a.\rho
\end{array}$\\[2mm]
\textbf{Expressions} $\mathit{Expr}$ are generated by this grammar with $n\geq 1$ and $k\geq 0$:
\[\begin{array}{lll}
s,t\in\mathit{Expr} & ::= & u \mid x :: \rho \mid (s\ \tau) \mid (s\ t) \mid (\seq s\ t) \mid (\letrec x_1::\rho_1=s_1,\dots,x_n::\rho_n=s_n\lin t)\\
                    &     & \hspace*{-5mm} \mid (c_{K,i}::(\tau)\ s_1\ \dots\ s_{ar(c_{K,i})}) \mid (\case_K\ s\of\{(Pat_{K,1} \casepf t_1)\ \dots\ (Pat_{K,|D_K|} \casepf t_{|D_K|})\})\\
Pat_{K,i}           & ::= & (c_{K,i}::(\tau)\ (x_1::\tau_1)\ \dots\ (x_{ar(c_{K,i})}::\tau_{ar(c_{K,i})}))\\
u\in\mathit{PExpr}  & ::= & (\Lambda a_1.\Lambda a_2.\dots\Lambda a_k.\lambda x::\tau.s)
\end{array}\]

\end{minipage}}
\caption{Syntax of expressions and types\label{fig:lrp}}
\end{figure*}

An expression is {\em well-typed} if it can be typed using typing rules that are defined in \cite{schmidt-schauss-sabel-WPTE:14}. 
\LRP{} is a core language of Haskell and is simplified compared to Haskell, because it does not have type classes and is only polymorphic in the bindings of \letrec variables. 
But \LRP{} is strong enough to express polymorphically typed lists, and functions working on such data structures. 

From now on we use $Env$ as abbreviation for a \texttt{letrec}-environment, $\{x_{g(i)}=s_{f(i)}\}_{i=j}^m$ for $x_{g(j)}=s_{f(j)},\dots,x_{g(m)}=s_{f(m)}$ and $alts$ for
 \texttt{case}-alternatives. We use $FV(s)$ and $BV(s)$ to denote free and bound variables of an expression $s$ and $LV(Env)$ to denote the binding variables of a
  \texttt{letrec}-environment. Furthermore we abbreviate $(c_{K,i}\ s_1\ \dots\ s_{ar(c_{K,i})})$ with $c\ \vv{s}$ and $\lambda x_1.\dots\lambda x_n.s$ with 
  $\lambda x_1,\dots,x_n.s$. The data constructors \texttt{Nil} and \texttt{Cons} are used to represent lists, but we may also use the Haskell-notation [] and (:) instead.\newpage
A {\em context} $C$ is an expression with exactly one hole $[\cdot]$ at expression position. A {\em value} is an abstraction $\lambda x.s$, 
a type abstraction $u$ or a constructor application $c\ \vv{s}$. 

After the reduction position is determined using the labeling algorithm of \cite{schmidt-schauss-sabel-WPTE:14}, 
a unique reduction rule of Fig.~\ref{fig:basred} is applied at this position
which constitutes a normal-order reduction step.

\begin{figure*}[!htb]
 \fbox{
\begin{minipage}{.98\textwidth}
$\begin{array}{@{}ll@{}}
\text{(lbeta)}   & C[((\lambda x.s)^\lsub\ r)] \rightarrow C[(\letrec x=r\lin s)]\\
\text{(Tbeta)}   & ((\Lambda a.u)^\lsub\ \tau) \rightarrow u[\tau/a]\\
\text{(cp-in)}   & (\letrec x_1=v^\lsub,\{x_i=x_{i-1}\}_{i=2}^m,Env\lin C[x_m^\lvis])\\
                 & \rightarrow (\letrec x_1=v,\{x_i=x_{i-1}\}_{i=2}^m,Env\lin C[v])\\
                 & \quad \text{where }v\text{ is a polymorphic abstraction}\\
\text{(cp-e)}    & (\letrec x_1=v^\lsub,\{x_i=x_{i-1}\}_{i=2}^m,Env,y=C[x_m^\lvis]\lin r)\\
                 & \rightarrow (\letrec x_1=v,\{x_i=x_{i-1}\}_{i=2}^m,Env,y=C[v]\lin r)\\
                 & \quad \text{where }v\text{ is a polymorphic abstraction}\\
\text{(llet-in)} & (\letrec Env_1\lin (\letrec Env_2\lin r)^\lsub) \rightarrow (\letrec Env_1,Env_2\lin r)\\
\text{(llet-e)}  & (\letrec Env_1,x=(\letrec Env_2\lin t)^\lsub\lin r) \rightarrow (\letrec Env_1,Env_2,x=t\lin r)\\
\text{(lapp)}    & C[((\letrec Env\lin t)^\lsub\ s)] \rightarrow C[(\letrec Env\lin (t\ s))]\\
\text{(lcase)}   & C[(\texttt{case}_K\ (\letrec Env\lin t)^\lsub\of alts)] \rightarrow C[(\letrec Env\lin (\texttt{case}_K\ t\of alts))]\\
\text{(seq-c)}   & C[(\seq v^\lsub\ t)] \rightarrow C[t]\qquad \text{if }v\text{ is a value}\\
\text{(seq-in)}  & (\letrec x_1=(c\ \vv{s})^\lsub,\{x_i=x_{i-1}\}_{i=2}^m,Env\lin C[(\seq x_{m}^\lvis\ t)])\\
                 & \rightarrow (\letrec x_1=v,\{x_i=x_{i-1}\}_{i=2}^m,Env\lin C[t]) \qquad \text{if }v\text{ is a value}\\
\text{(seq-e)}   & (\letrec x_1=(c\ \vv{s})^\lsub,\{x_i=x_{i-1}\}_{i=2}^m,Env,y=C[(\seq x_{m}^\lvis\ t)]\lin r)\\
                 & \rightarrow (\letrec x_1=v,\{x_i=x_{i-1}\}_{i=2}^m,Env,y=C[t]\lin r)\qquad \text{if }v\text{ is a value}\\
\text{(lseq)}    & C[(\seq (\letrec Env\lin s)^\lsub\ t)] \rightarrow C[(\letrec Env\lin (\seq s\ t))]\\
\text{(case-c)}  & C[(\case_K\ c^\lsub\of \{ \dots (c\rightarrow t) \dots\} )] \rightarrow C[t] \quad \text{if }ar(c)=0\text{, otherwise:}\\
                  & C[(\case_K\ (c\ \vv{x})^\lsub\of \{ \dots ((c\ \vv{y})\rightarrow t) \dots \})] \rightarrow C[(\letrec \{y_i=x_i\}_{i=1}^{ar(c)}\lin t)]\\
\text{(case-in)} & (\letrec x_1=c^\lsub, \{x_i=x_{i-1}\}_{i=2}^m,Env\lin C[(\case_K\ x_m^\lvis\of \{\dots(c\rightarrow r)\dots\})])\\
                  & \enskip \rightarrow (\letrec x_1=c, \{x_i=x_{i-1}\}_{i=2}^m,Env\lin C[r])\qquad \text{ if }ar(c)=0;~\text{otherwise:}\\
                  & (\letrec x_1=(c\ \vv{t})^\lsub,\{x_i=x_{i-1}\}_{i=2}^m,Env\lin C[(\case_K\ x_m^\lvis\of \{ \dots ((c\ \vv{z})\rightarrow r) \dots \})])\\
                 &  \rightarrow (\letrec x_1=(c\ \vv{y}),\{y_i=t_i\}_{i=1}^{ar(c)}, \{x_i=x_{i-1}\}_{i=2}^m,Env\lin C[\letrec \{z_i{=}y_i\}_{i=1}^{ar(c)}\lin r])\\
\text{(case-e)}  & (\letrec x_1=c^\lsub, \{x_i=x_{i-1}\}_{i=2}^m, u=C[(\case_K\ x_m^\lvis\of \{\dots(c\rightarrow r_1)\dots\})],\quad Env\\
                 & \enskip\lin r_2)\\
                  & \enskip \rightarrow (\letrec x_1=c, \{x_i=x_{i-1}\}_{i=2}^m, u=C[r_1],Env\lin r_2) \qquad \text{ if }ar(c)=0;~\text{otherwise:}\\
                  & (\letrec x_1=(c\ \vv{t})^\lsub,\{x_i=x_{i-1}\}_{i=2}^m,\\
                 & \enskip\quad u=C[(\case_K\ x_m^\lvis\of \{ \dots((c\ \vv{z}) \rightarrow r)\dots \})],Env\lin s)\\
                 & \enskip \rightarrow (\letrec x_1=(c\ \vv{y}), \{y_i=t_i\}_{i=1}^{ar(c)}, \{x_i=x_{i-1}\}_{i=2}^m,\\[1mm]
                 & \enskip\qquad u=C[\letrec \{z_i=y_i\}_{i=1}^{ar(c)}\lin r],Env\lin s)\\
\end{array}$
\end{minipage}}
\caption{Basic reduction rules. The variables $y_i$ are fresh.\label{fig:basred}}
\end{figure*}

The classical $\beta$-reduction is replaced by the sharing (lbeta). 
(Tbeta) is used for type instantiations concerning polymorphic type bindings. The rules  (cp-in) and (cp-e) copy abstractions which is needed 
when the reduction rule has to reduce an application $(f\ g)$ where $f$ is an abstraction defined in a \texttt{letrec}-environment. 
The rules (llet-in) and (llet-e) are used to   merge nested \texttt{letrec}-expressions;
 (lapp), (lcase) and (lseq) move a \texttt{letrec}-expression out of an application, a \texttt{seq}-expression or a \texttt{case}-expression;
  (seq-c), (seq-in) and (seq-e) evaluate \texttt{seq}-expressions, where the first argument has to be a value or a value which is reachable through 
  a \texttt{letrec}-environment.
  (case-c), (case-in) and (case-e) evaluate 
  \texttt{case}-expressions by using \texttt{letrec}-expressions to realize the insertion of the variables for the appropriate
  \texttt{case}-alternative. 

The following abbreviations are used:  (cp) is the union of (cp-in) and (cp-e);
 (llet) is the union of (llet-in) and (llet-e);
  (lll) is the union of (lapp), (lcase), (lseq) and (llet);
  (case) is the union of (case-c), (case-in), (case-e);
  (seq) is the union of (seq-c), (seq-in), (seq-e).

Normal order reduction steps and notions for termination are defined as follows:
\begin{definition}[Normal order reduction] \label{def:redstep}
A {\em normal order reduction step} $s\rnoLRP t$ is performed (uniquely) if the labeling algorithm in \cite{schmidt-schauss-sabel-WPTE:14} terminates 
 on $s$, inserting $\lsub$ (subexpression) and $\lvis$  (visited by the labeling), and the applicable rule of
 Fig.~\ref{fig:basred} produces $t$. The notation
 $\rnortLRP$  is the reflexive, transitive closure, 
  $\rnopLRP$ is the  transitive closure of $s\rnoLRP t$; and 
  $\rnokLRP$ denotes $k$ normal order steps.
\end{definition}

\begin{definition} \label{def:whnf}
\begin{compactenum}
\item A weak head normal form (WHNF) is a value, or an expression $\letrec Env\lin v$, where $v$ is a value, or an expression $\letrec x_1=c\ \vv{t},\{x_i=x_{i-1}\}_{i=2}^m,Env\lin x_m$.
\item An expression $s$ {\em converges} to an expression $t$ ($s\conv t$ or $s\conv$ if we do not need $t$) if $s\rnortLRP t$ where $t$ is a WHNF.
  Expression $s$ {\em diverges} ($s\diverges$) if it does not converge. 
\item The symbol $\bot$ represents a closed diverging expression, \eg $\letrec~x=x~\lin~x$. 
\end{compactenum}
\end{definition}

\begin{definition} \label{def:equi}
For \LRP-expressions $s,t$, $ s\leq_c t$ holds iff $\forall C[\cdot]:C[s]\conv \Rightarrow C[t]\conv$,
and $s\sim_c t$ holds iff $s\leq_ct$ and $t\leq_cs$. The relation $\leq_c$ is called {\em contextual preorder} and $\sim_c$ is called {\em contextual equivalence}.
\end{definition}

The following notion of reduction length is used for measuring the time behavior in \LRP{}.
\begin{definition} \label{def:rln}
For a closed \LRP-expression $s$ with $s\conv s_0$, let $\rln(s)$ be the sum of all (lbeta)-, (case)- and (seq)-reduction steps in $s\conv s_0$,
and let   $\rlnall(s)$ be the number of all reductions, but not (TBeta), in $s\conv s_0$.
\end{definition}

\section{\LRP{} with Eager Garbage Collection} \label{sec:lrpgc}
The calculus \LRP{} does not remove garbage itself.
However, for measuring the space-behavior, garbage should be ignored (and removed). Thus in this section
we add reduction rules for removing garbage, and show that an evaluation strategy with
garbage collection does not change the semantics of the calculus. In Fig.~\ref{fig:gcred} 
the rules for garbage collection are defined. We use (gc) for the union of (gc1) and (gc2).
\begin{figure*}[!htb]
 \fbox{
\begin{minipage}{.98\textwidth}
$\begin{array}{ll}
\text{(gc1)}     & (\letrec\{x_i=s_i\}_{i=1}^n,Env\lin t) \rightarrow (\letrec Env\lin t) \quad\text{if for all }i:x_i\notin FV(t,Env)\\
\text{(gc2)}     & (\letrec x_1=s_1,\ \dots,\ x_n=s_n\lin t) \rightarrow t  \quad\text{if for all }i:x_i\notin FV(t)\\
\end{array}$

\end{minipage}}
\caption{Garbage collection rules\label{fig:gcred}}
\end{figure*}

Since we focus on space improvements,      
it is useful to model eager garbage collection also in the calculus, which leads to the calculus \LRPgc{}.
It collects (dynamic) garbage only in the top \tletrec, which is sufficient to remove all (reference-) garbage, if the starting
program does not contain garbage.

\begin{definition} \label{def:nogc}\label{def:lrpgc}
\LRPgc{} is \LRP{} where the normal-order reduction is modified as follows:\\ 
Let $s$ be an \LRP{}-expression.    A {\em normal-order-gc (nogc) reduction step} is defined by two cases:\\[-3mm]
\begin{compactenum}
\item If a (gc)-transformation is applicable to $s$ in the top \tletrec, then this transformation is applied to $s$, where the maximal number
 of bindings is removed.
\item If 1. is not applicable and an \LRP{}-normal-order reduction step is applicable to $s$, then this normal-order reduction is applied to $s$.
\end{compactenum}\ \\[-3mm]
A sequence of nogc-reduction steps is called an {\em nogc-reduction sequence}. 
An \LRPgc{}-WHNF $s$ is either an \LRP-WHNF which is not a {\tletrec} expression, or it is an \LRP-WHNF that is a  \tletrec-expression which does not permit (gc)-transformation in 
the top {\tletrec}. 
If for $s$, there is an nogc-reduction sequence that leads to an \LRP{gc}-WHNF,  then we say $s$ {\em converges \wrt \LRPgc{}} and write $s\convgc$. \\
In $\LRP{gc}$, the equivalence $s \sim_{c,\nogc} t$ is defined as for $\LRP$, but \wrt $\convgc$.
\end{definition}
Several subsequent (gc)-reductions are possible in an $\LRP{gc}$-normal-order reduction sequence, for example a  (gc2)-reduction followed by a (gc1)-reduction.

We will show in the following that the calculi $\LRP$ and $\LRPgc$  are equivalent \wrt convergences as well as \wrt the rln-measure. \\

In the following we will use complete sets of forking (and commuting) diagrams (more information on this technique is in \cite{safenoecker}).
 A forking is an overlapping between a normal-order transformation and a non-normal-order transformation 
(also called internal transformation). A complete set of forking diagrams for transformation $b$ contains a forking diagram for each possible 
forking of the form 
$s_2 \xleftarrow{nogc} s_1 \xrightarrow{b} s_1'$. 
The treatment is similar for commuting diagrams and commuting situations $s_1 \xrightarrow{b} s_1' \xrightarrow{nogc} s_2'$. 
We will use the notation $(nogc,a)$ which is an arbitrary nogc-reduction  if not otherwise stated. 
If the label $a$ is used twice,  then all occurrences of $a$ represent the same rule.
Let $\LCSC := \{\text{(lbeta)}, \text{(case)}, \text{(seq)}, \text{(cp)}\}$. %

\begin{lemma}\label{lemma-gc-diagrams}
The forking diagrams between a nogc-reduction and a non-normal-order (gc)-transforma\-tion in \LRP{gc} in any context are the following:
\[
\begin{array}{llll}
\begin{minipage}[t]{0.24\textwidth}
\xymatrix@R=6mm@C=12mm{
 s_1 \ar[r]^{gc} \ar[d]_{nogc,a} & s_1'\ar@{-->}[d]_{nogc,a} \\
 s_2 \ar@{-->}[r]^{gc}& s_2' 
}
\end{minipage}
& 
\begin{minipage}[t]{0.24\textwidth}
\xymatrix@R=6mm@C=12mm{
 s_1 \ar[rr]^{gc} \ar[d]_{nogc,cp} && s_1'\ar@{-->}[d]_{nogc,cp} \\
 s_2 \ar@{-->}[r]^{gc} &  s_3 \ar@{-->}[r]^{gc} & s_2'  
}
\end{minipage}
&
\begin{minipage}[t]{0.24\textwidth}
\xymatrix@R=6mm@C=12mm{
 s_1 \ar[r]^{gc} \ar[d]_{nogc,a} & s_1'\ar@{-->}[dl]_{nogc,a} \\
 s_2 
}
\end{minipage} 
&
\begin{minipage}[t]{0.24\textwidth}
\xymatrix@R=6mm@C=12mm{
 s_1 \ar[r]^{gc} \ar[d]_{nogc,lll} & s_1'\\
 s_2\ar@{-->}[ur]_{gc}  
}
\end{minipage}
\end{array}
\]
The commuting diagrams can be immediately derived from the forking diagrams. 
\end{lemma}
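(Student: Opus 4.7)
The plan is to proceed by a case analysis on the nogc-reduction rule applied to $s_1$ together with the position of the non-normal-order (gc)-transformation. First, I would handle the disjoint case: if the gc acts on a letrec in a subterm unaffected by the nogc redex, both steps commute directly, giving diagram 1 with one gc and one nogc step on each remaining leg.

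Next I would examine the cases where the gc-applicable letrec sits within or adjacent to the subterm rewritten by the nogc step, splitting on the nogc rule. For the non-duplicating, non-relocating rules --- (lbeta), (Tbeta), (seq-c), the (case) rules, and a top-level (gc) --- the nogc step either rewrites into a context still containing the gc-applicable bindings intact, giving diagram 1, or it discards the subterm housing those bindings (for instance (case-c) selects one alternative, (seq-c) drops its value, or top-level (gc) removes the enclosing binding outright). In the discarding subcase, the garbage vanishes with its container, the gc is subsumed by the nogc step, and diagram 3 applies. For (cp-in) and (cp-e) the copied polymorphic abstraction $v$ may contain as a subterm the letrec targeted by the gc; duplication of $v$ then duplicates the garbage, so two subsequent gc steps --- one on the original occurrence and one on the copy --- are needed to close the fork, yielding diagram 2. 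The (lll) rules reassociate or merge letrec structure; when they relocate the gc-applicable letrec (e.g., (llet) merging environments or (lapp/lcase/lseq) pushing a letrec outward), the relocated bindings remain garbage, and a single gc step from $s_2$ joins back to $s_1'$, producing diagram 4.

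The hard part is completeness of the enumeration: one must argue that (cp) is the sole duplicating rule (so diagram 2 arises nowhere else) and that (lll) is a pure restructuring of letrec scopes (so its output always requires at most one further gc step to match $s_1'$). The freshness conditions imposed on variables introduced by (lbeta), (case-*), and (cp) also need to be verified, so that the subsequent gc still removes exactly the intended bindings without scope conflicts. Finally, the commuting diagrams follow by running the same case split symmetrically: from a commuting situation $s_1 \xrightarrow{nogc,a} s_1' \xrightarrow{gc} s_2'$, the same classification of (rule, position) produces the mirrored closure along one of the four shapes.
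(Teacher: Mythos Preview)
Your proposal is correct and follows essentially the same approach as the paper: a case analysis on the relative position of the internal (gc) and the nogc-redex, classifying the interactions into independent (diagram 1), duplicated via (cp) (diagram 2), absorbed because the gc-target lies in a discarded subterm (diagram 3), and letrec-relocated via (lll) so that a single subsequent (gc) recovers $s_1'$ (diagram 4). The paper's proof is much terser---it simply states which situation gives rise to each diagram and illustrates diagram 4 with the (lapp) example---whereas you spell out the rule-by-rule enumeration and flag the completeness obligations (only (cp) duplicates, (lll) only restructures, freshness preserves gc-applicability); this extra detail is sound and is what a full write-up of the paper's argument would require.
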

\begin{proof}
The first diagram occurs if the nogc-reduction and the transformation can be commuted. The second diagrams happens if the gc-transformation
was done in the copied abstraction. The third diagram occurs, if the effect of the gc-transformation was also done by the nogc-reduction,
where we assume that $\xrightarrow{gc}$ and $\xrightarrow{nogc,a}$ are different.
Finally the fourth diagram occurs for example in  
$(\letrec Env_1 \lin (s_1~s_2)) \xleftarrow{\nogc} ((\letrec Env_1\lin s_1)~s_2) \xrightarrow{gc}$ $(s_1~s_2)$ and where
$(\letrec  Env_1 \lin (s_1~s_2))$  $\xrightarrow{gc}$ $(s_1~s_2)$.  
\end{proof}

\begin{theorem}\label{thm:equivalence}
\LRP{} and \LRP{gc} are convergence-equivalent, \ie{} for all expressions $s$: $s\conv \iff s\convgc$.
\end{theorem}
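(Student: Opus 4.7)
The plan is to prove both implications by rearranging reduction sequences using the forking and commuting diagrams of Lemma~\ref{lemma-gc-diagrams}.

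For the direction $s\convgc \Rightarrow s\conv$, I would take a given nogc-reduction sequence $s = s_0 \xrightarrow{nogc} s_1 \xrightarrow{nogc} \dots \xrightarrow{nogc} s_n$ ending at an $\LRPgc$-WHNF $s_n$. Each step is either an interleaved gc-transformation or an $\LRP$ normal-order step. Using the commuting-form versions of the diagrams in Lemma~\ref{lemma-gc-diagrams}, I would push every gc-step past any subsequent $\LRP$ normal-order step, moving all gc-steps to the tail of the sequence. What remains is an $\LRP$ normal-order reduction $s \xrightarrow{LRP,*} t'$ followed by gc-reductions $t' \xrightarrow{gc,*} s_n$. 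It then suffices to show that $t'$ is an $\LRP$-WHNF. This is a short syntactic argument: each of the three WHNF shapes (value, $\letrec Env \lin v$, or $\letrec x_1=c\,\vv{t},\{x_i=x_{i-1}\},Env \lin x_m$) depends only on the top-level shape and on the binding chain reachable from the body, none of which gc can destroy, so if $s_n$ matches a WHNF pattern then so does $t'$.

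For the direction $s\conv \Rightarrow s\convgc$, I would argue by induction on the length $n$ of an $\LRP$ normal-order reduction $s \xrightarrow{LRP,*} t$ to an $\LRP$-WHNF $t$ that the nogc strategy from $s$ reaches an $\LRPgc$-WHNF. In the base case $n=0$, iterating gc on $s$ terminates, since both (gc1) and (gc2) strictly decrease the number of bindings in the top \tletrec{}; the result is an $\LRPgc$-WHNF by the argument from direction one (WHNF is preserved under gc). In the inductive step, the nogc strategy from $s$ first exhausts all gc-steps, yielding some $s'$ with $s \xrightarrow{gc,*} s'$, and then performs an $\LRP$ normal-order step. Using the first, third, and fourth diagrams of Lemma~\ref{lemma-gc-diagrams} to commute the interstitial gc-steps past the first $\LRP$-step $s \xrightarrow{LRP} s_1$, I obtain an $\LRP$-reduction starting at $s'$ that, after at most one $\LRP$-step, meets a state gc-reachable from $s_1$, and hence still at $\LRP$-distance at most $n-1$ from a WHNF. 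Applying the induction hypothesis completes the step.

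The most delicate point is the (cp)-case (second diagram), where commuting a single gc past a (cp)-step requires two gc-steps on the other side, because gc may need to act on both the copied abstraction and its source. I would handle this by tracking the remaining $\LRP$-reduction length as the main induction measure, separately from the bookkeeping of gc-steps, and observing that the extra gc-step contributes only to the (well-founded) gc-count, not to the $\LRP$-count. A smaller subtlety, used throughout, is that a single state admits only finitely many consecutive gc-steps, which again follows directly from the strict decrease in the number of bindings at the top \tletrec.
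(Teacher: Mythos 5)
Your overall architecture is genuinely different from the paper's: the paper first disposes of $s\convgc \Rightarrow s\conv$ by citing the (already known) correctness of (gc) and of all calculus rules \wrt \LRP{}, and then reduces the converse to the single claim ``(gc) is a correct transformation inside \LRPgc{}'', proved by two diagram inductions. You instead rearrange reduction sequences directly in both directions. Your first direction is essentially sound (modulo the remark that Lemma~\ref{lemma-gc-diagrams} is stated for \emph{non-normal-order} (gc) against nogc-steps, whereas you apply its commuting form to the \emph{normal-order} (gc)-steps of the nogc strategy against \LRP{}-normal-order steps; these configurations coincide morally but not literally, so you would have to restate or re-verify the diagrams for your reading).

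The genuine gap is in the direction $s\conv \Rightarrow s\convgc$, at the step ``meets a state gc-reachable from $s_1$, and hence still at \LRP{}-distance at most $n-1$ from a WHNF.'' Being gc-reachable from a term that reaches a WHNF in $n-1$ normal-order steps does \emph{not} by itself bound the \LRP{}-distance of the gc'd term: what you are silently using is the claim that a (gc)-transformation never increases the length of the normal-order reduction to a WHNF (indeed, that it preserves convergence at all). That claim is precisely the hard core of the paper's proof, and it is not free: commuting a (gc) leftwards past normal-order steps duplicates it at (cp)-steps (second diagram) and can absorb or create (lll)-steps (fourth diagram), which is why the paper resorts to the lexicographic measure $(\rln_{\LCSC}(s_1'),\mu_{lll}(s_1),|s_1'|,\rlnall(s_1'))$ rather than plain reduction length. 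Your proposed main measure, the remaining \LRP{}-reduction length, is not decreased by the (lll)/(gc)-only cases, so the induction as you set it up does not close. The gap is fillable --- state and prove, as a separate lemma by induction over the diagrams, that $s \xrightarrow{gc} s'$ and $s \rnokLRP t$ with $t$ a WHNF imply $s' \xrightarrow{\LRP,\le k} t'$ for some WHNF $t'$, or adopt the paper's refined measure --- but as written this key step is asserted rather than proved.
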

\begin{proof}
If $s\convgc$ then $s\conv$ holds, since (gc) and all reductions of the calculus are  correct \wrt \LRP-normal-order reduction,
  which follows from their untyped correctness (see \cite{safenoecker}).\\
Under the assumption that (gc) is correct in \LRP{gc}, it is straightforward to show that $s\conv$ implies $s\convgc$. \\
It remains to show that (gc) is correct in \LRP{gc}:
Therefore we have to use the diagrams in  Lemma \ref{lemma-gc-diagrams} for (gc). 
We consider the situation $s_0 \xleftarrow{\nogc,*}  s_1 \xrightarrow{gc}$ $s_1'$ 
where $s_0$ is an $\LRPgc$-WHNF. For the induction proof we consider the smaller diagram
  $s_2 \xleftarrow{\nogc}  s_1 \xrightarrow{gc}$ $s_1'$ and show that there is a nogc-reduction of $s_1'$ such that $\rlnall(s_1') \leq \rlnall(s_1)$. 
  First we observe that $\LRPgc$-WHNFs remain $\LRPgc$-WHNFs under (gc). \\
  The induction measure is $\rlnall(s_1)$. 
  For the situation $s_2 = s_1'$ or if     
  any of the four diagrams applies to the situation,  
  the induction hypothesis applies, where in case of diagrams 2, we have to apply it twice.  This shows that there is a nogc-reduction  of  $s_1$ to a 
  $\LRPgc$-WHNF.
  
  The second part is to consider the situation $s_1 \xrightarrow{gc}$ $s_1'\xrightarrow{\nogc,*} s_0'$, where $s_0'$ is an $\LRPgc$-WHNF. 
  Here we show more: that there is an nogc-reduction of $s_1$ with $\rln_{LCSC}(s_1) \leq \rln_{LCSC}(s_1')$, 
  where  $\rln_{LCSC}$ counts the normal-order reductions from $\LCSC$ until a WHNF is reached.
    The induction is on the lexicographic combination of the measures $(\rln_{LCSC}(s_1'),$ $\mu_{lll}(s_1),$ $|s_1'|,\rlnall(s_1'))$, where 
    $\mu_{lll}$ is the 
    measure from \cite{safenoecker} that is strictly decreased by every $\xrightarrow{lll}$ and $\xrightarrow{gc}$-reduction, and $|s_1'|$ is the 
    size of $s_1'$ as an expression. 
    If $s_1'$ is an \LRP-WHNF, then either $s_1 \xrightarrow{gc} s_1'$ is a normal-order reduction, and we are done, 
    or it is
     not a  normal-order reduction, and $s_1$ is also an \LRP-WHNF. 
    
    If  $s_1 \xrightarrow{gc}$ $s_1'$ is an nogc-reduction, then the claim holds. 
    In the case of the first diagram, the induction hypothesis can be applied by the following reasoning:
    if the $s_1'$-reduction is a LCSC-reduction, then the measure is decreased; if it is an (lll) or (gc), then the 
    first component  is the same but pair of the second and third component is strictly smaller.  In the case of the second diagram, $\rln_{LCSC}(s_2')$ is strictly
    smaller, and hence also, by the induction hypothesis,  $\rln_{LCSC}(s_3)$  and we can again apply the induction hypothesis.
    In the case of the third diagram, reasoning is obvious. Finally, in the case of the fourth diagram, $\mu_{lll}(s_2) < \mu_{lll}(s_1')$, hence the 
    induction hypothesis can be applied. 
      \qedhere 
\end{proof}

\begin{corollary}
The contextual equivalences of $\LRP$  and $\LRP{gc}$  are identical.
\end{corollary}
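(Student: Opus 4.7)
The proof will be a short direct consequence of Theorem~\ref{thm:equivalence}. The plan is to unfold the definition of contextual preorder in both calculi and reduce the claim to the convergence equivalence already established.

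First I would observe that $\LRP$ and $\LRP{gc}$ share the same syntax: only the normal-order reduction relation (and hence the notion of WHNF / convergence) differs, so the class of contexts $C[\cdot]$ is literally the same on both sides. This is the key observation that allows quantifying over the same $C$ when comparing $\leq_c$ and $\leq_{c,\nogc}$.

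Next, I would apply Theorem~\ref{thm:equivalence} pointwise: for every expression $e$ (in particular for every $C[s]$ and $C[t]$), we have $e\conv \iff e\convgc$. Substituting this equivalence into the definitions, we get
\[
s\leq_c t \iff \forall C:\ C[s]\conv\Rightarrow C[t]\conv \iff \forall C:\ C[s]\convgc\Rightarrow C[t]\convgc \iff s\leq_{c,\nogc}t,
\]
and symmetrically for $\sim_c$ and $\sim_{c,\nogc}$. This yields the identity of the two contextual equivalences.

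There is essentially no obstacle here; the only subtlety worth spelling out is that the set of admissible contexts is invariant under replacing $\LRP$-reduction with $\LRP{gc}$-reduction (since contexts are a syntactic notion, not a reduction-theoretic one). Once that is noted, the corollary is a one-line substitution using Theorem~\ref{thm:equivalence}.
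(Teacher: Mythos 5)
Your proposal is correct and matches the paper's (implicit) argument: the paper states this corollary without a separate proof, treating it as an immediate consequence of Theorem~\ref{thm:equivalence}, exactly as you do by noting that contexts are a purely syntactic notion shared by both calculi and substituting the convergence equivalence into the definition of $\leq_c$. Nothing further is needed.
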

The proof of Theorem \ref{thm:equivalence} also shows that the rln-measure of expressions is the same for $\LRP$ and $\LRP{gc}$.  
Hence we can drop the distinction between {\LRP} and \LRP{gc} \wrt $\rln$ as well as for $\sim_c$.

\section{Time- and Space-Improvements} \label{sec:spimp}
For space analyses, we first define the size of expressions:
\begin{definition} \label{def:size}
The size $\size(s)$ of an expression $s$ is the following number:\\
$
\begin{array}{@{}c@{~~}c@{}}
\begin{array}{@{}l@{~}c@{~}l@{}}
\size(x)                                          &=& 0\\
\size(c\vv{s})                       &=& 1+\sum_{i=1}^n\size(s_i)\\
\size(\lambda x.s)                              &=& 1+\size(s)\\
\size(c\vv{x} \casepf e)              &=& 1+\size(e)\\
\end{array}
&
\begin{array}{@{}l@{~}c@{~}l@{}}
\size(s\ t)                                     &=& 1+\size(s)+\size(t)\\
\size(\seq s_1\ s_2)                            &=& 1+\size(s_1)+\size(s_2)\\
\size(\letrec \{x_i=s_i\}_{i=1}^n \lin s) &=& \size(s)+\sum_{i=1}^n\size(s_i)\\
\size(\case\ e\of \{alt_1\ \dots\ alt_n\})      &=& 1+\size(e)+\sum_{i=1}^n\size(alt_i)\\
\end{array}
\end{array}
$
\end{definition}

Type annotations are not counted by the size measure and thus they are also not shown in the definition of $\size$.
Note that our chosen size measure also does not count variables, the number of \texttt{letrec}-bindings, nor the \texttt{letrec}-label itself.
This can be justified, since these constructs are usually represented more efficiently (or do not occur) in realistic implementations,
for example in the abstract machine. 

For measuring the space-behavior of $s$, we use the maximum size occurring in an $nogc$-reduction sequence to a WHNF:
\begin{definition} \label{def:spmax}
Let $s$ be a closed \LRP{}-expression. If $s=s_0\rnogc s_1\rnogc \dots \rnogc s_n$ where $s_n$ is a WHNF, then $\spmax(s)$ is the maximum of $\size(s_i)$. 
If $s\diverges$ then $\spmax(s)=\infty$.
\end{definition}
This measure is very strict and especially appropriate if the available space is limited. 
A transformation is a time improvement \cite{schmidt-schauss-sabel-PPDP:2015,ppdp-IB55} if it never increases the $\rln$-reduction length,
and a transformation is a space improvement if it never increases the space consumption. 
\begin{definition} \label{def:spmax-improvement}
Let $s,t$ be two expressions with $s\sim_c t$.  
Then $s$ is a {\em maxspace-improvement} of $t$, $s\spmaxleq t$, if for all contexts $C$: If $C[s]$, $C[t]$ are closed then $\spmax(C[s]) \leq  \spmax(C[t])$.\\
We say $s$ {\em (time-)improves} $t$, $s \preceq t$, if for all contexts $C$: If $C[s]$, $C[t]$ are closed, then $\rln(C[s]) \leq  \rln(C[t])$.
\end{definition}

These relations are precongruences. Note that we use $n < \infty$, and $\infty \leq \infty$.

\section{An Abstract Machine for \LRP} \label{sec:lrpint}

In this section we present the abstract machine (a variant of the Sestoft-machine) to evaluate \LRP-programs and measure their time and space usage.
However, the conceptually simple abstract machine has to be extended and adapted to obtain a good  behavior w.r.t. space measuring: 
it must be able to remove unused bindings in 
{\tletrec}s, and it has to prevent superfluous duplications of expressions    
in the input as well as their dynamic creation.  
A first step is to transform the \LRP-expressions into so-called machine expressions 
on which the Sestoft-machine can be applied. These are \LRP-expressions with the restriction that 
arguments of applications, constructor applications, and the second argument of {\tt seq} must be variables.
We also remove all type information.   
\begin{definition} \label{def:mtrans}
The translation $\psi$ from arbitrary \LRP{}-expressions into machine expressions is defined as follows,
 where $y,y_i$ are fresh variables:
$$
\begin{array}{l@{\qquad\qquad}l}
\begin{array}[t]{@{}l@{~}c@{~}l@{}}
  \psi(x :: \rho) &\hspace*{0.6mm}:= &x
\\
  \psi(s\ \tau) &\hspace*{0.6mm}:= &\psi(s)
\\
  \psi(\Lambda a_1.\Lambda a_2.\dots.\Lambda a_k.\lambda x{::}\tau.s) &\hspace*{0.6mm}:=& \lambda x.\psi(s)
\\
\end{array}
&
\begin{array}[t]{@{}l@{~}c@{~}l@{}}
  \psi(s\ t)    &:= &\letrec y=\psi(t)\lin (\psi(s)\ y)
\\
  \psi(\seq s\ t) &:= &\letrec y=\psi(t)\lin (\seq \psi(s)\ y)
\\
  \psi(c~\vv{s}) &:=& \begin{array}[t]{@{}l@{}}\letrec \{y_i=\psi(s_i)\}_{i=1}^n\lin (c~\vv{y_i})\end{array}
\\

\end{array}
\\
\multicolumn{2}{l@{}}{
\begin{array}{@{}l@{~}c@{~}l@{}}
  \psi(\letrec~\{x_i=s_i\}_{i=1}^n \lin t) := \letrec \{x_i=\psi(s_i)\}_{i=1}^n \lin \psi(t)
\\
  \psi(\case_K\ e\of \{ (Pat_{K,1} \casepf t_1)\ \dots\ (Pat_{K,|D_K|} \casepf t_{|D_K|})\})\\
 \multicolumn{3}{r}{\hspace*{4.45cm}:= \case_K\ \psi(e)\of \{ (Pat_{K,1} \casepf \psi(t_1))\dots(Pat_{K,|D_K|} \casepf \psi(t_{|D_K|}))\}}
\\
\end{array}
}
\end{array}
 $$
\end{definition}
The transformation adds \texttt{letrec}-expressions and removes type annotations. 
This transformation does not change the reduction length, \ie $\rln(s)=\rln(\psi(s))$ (see \cite{ppdp-IB55}). 
It is easy to see that $\size(s)=\size(\psi(s))$ holds.
Below we will show that for machine expressions $s$ the value $\spmax(s)$ is correctly computed. Unfortunately, this does not hold in general:
for example $((\tseq~\True~(\lambda x.a))~\True)$ and $(\tletrec~x_1 = \True,x_2 =  \lambda x.a~\tin~(\tseq~\True~x_2)~x_1)$ have different 
$\spmax$-values for $\size(a) \ge 1$: 
$5+\size(a)$ and $4+2\size(a)$, respectively, since the latter has a space peak at 
$(\tletrec~x_1 = \True,x_2 = \lambda x.a~\tin~(\lambda x.a)~x_1)$.  
 
The used abstract machine is defined in \cite{schmidt-schauss-sabel-PPDP:2015,ppdp-IB55} and is based on the abstract machine \absm{} by Peter Sestoft (see \cite{sestoft}),
which was designed for call-by-need evaluation. 
The machine is extended in a straightforward way to handle \texttt{seq}-expressions, where 
a \texttt{seq}-expression evaluates the first argument to a value and then returns the second argument.
A state $Q$ is a triple $\langle \Gamma \mid s \mid S\rangle$, where $\Gamma$ is an environment of variable-to-expression bindings (sometimes called heap), 
$s$ is a machine expression (often called control expression) and $S$ is a stack with entries $\RetApp(x)$, $\RetSeq(x)$, $\RetCase(alts)$ and $\RetHeap(x)$ where $x$ is a 
variable and $alts$ is a list of case alternatives.
 Because the stack is implemented as a list we sometimes use the usual list notation for the stack.
The control expression is the expression which has to be evaluated next, together with the stack it controls the control flow of the program. 
The stack is also responsible to trigger updates on the heap. 
Note that the WHNFs of the abstract machine are machine expressions that are WHNFs.

The abstract machine is defined in Fig.~\ref{fig:mark1}. The execution of a program starts with the whole program as control expression and an empty heap and stack. 
The transition rules define the transition from one state to the next, where at most one rule is applicable in each step. 

        The rules (Unwind1), (Unwind2), and (Unwind3) perform the search for the redex (according to the labeling in $\LRP$), by storing arguments of applications, 
        \texttt{seq}-expressions, or \texttt{case}-alternatives on the stack.
        The rule 
        (Lookup) moves heap bindings into the scope of evaluation (if they are demanded). 
If evaluation of a binding is finished, the rule (Update) restores the result in the heap. 
(Letrec) moves \texttt{letrec}-bindings into the heap, by creating new heap bindings.
        (Subst) is applicable if the first argument of an application is evaluated to an abstraction and the stack contains the argument. It then performs
        a $\beta$-reduction (with a variable as argument).
(Branch) analogously performs a (case)-reduction on the abstract machine.
(Seq) evaluates a \texttt{seq}-expression.
Rule (Blackhole) results in an infinite loop, i.e. an error.

The abstract machine iteratively applies these rules until a final state is reached. 
Note that the control expression of a state is a \absm{} value if no rule is applicable.

The  (optional) rule (GC)  performs garbage collection of bindings. The (optional) rule (SCRem) performs a specific form of   saving space:  
it prevents unnecessary copying of values by  
 avoiding the intermediate  construction of indirections $y = x$  and applying  the replacement instead.
For a correct space measurement, these rules have to be applied whenever possible.

\begin{figure*}[!htb]
 \fbox{
\begin{minipage}{.97\textwidth}   
\noindent{\bf Initial state:}  \ $\langle\emptyset \mid e \mid []\rangle$ where $e$ is a machine expression.

\noindent{\bf Transition rules:}\\[.3ex]
$\begin{array}{ll}
\text{(Unwind1)}   & \langle \Gamma \mid (s\ x) \mid S \rangle \rightarrow \langle \Gamma \mid s \mid \RetApp(x):S \rangle\\
\text{(Unwind2)}   & \langle \Gamma \mid (\seq s\ x) \mid S \rangle \rightarrow \langle \Gamma \mid s \mid \RetSeq(x):S \rangle\\
\text{(Unwind3)}   & \langle \Gamma \mid \case_K\ s \of alts \mid S \rangle \rightarrow \langle \Gamma \mid s \mid \RetCase(alts):S \rangle\\
\text{(Lookup)}    & \langle \Gamma, x=s \mid x \mid S\rangle \rightarrow \langle \Gamma \mid s \mid \RetHeap(x):S\rangle\\
\text{(Letrec)}    & \langle \Gamma \mid \letrec Env \lin s \mid S\rangle \rightarrow \langle \Gamma,Env \mid s \mid S \rangle\\
\text{(Subst)}     & \langle \Gamma \mid \lambda x.s \mid \RetApp(y):S \rangle \rightarrow \langle \Gamma \mid s[y/x] \mid S \rangle\\
\text{(Branch)}    & \langle \Gamma \mid c_{K,i}\ \vv{x} \mid \RetCase(\dots\ ((c_{K,i}\ \vv{y}) \casepf t)\ \dots):S \rangle \rightarrow \langle \Gamma \mid t[\vv{x}/\vv{y}] \mid S \rangle\\
\text{(Seq)}       & \langle \Gamma \mid v \mid \RetSeq(y):S \rangle \rightarrow \langle \Gamma \mid y \mid S\rangle \quad \text{if }v\text{ is a \absm{} value}\\
\text{(Update)}    & \langle \Gamma \mid v \mid \RetHeap(x):S\rangle \rightarrow \langle \Gamma, x=v \mid v \mid S \rangle \quad \text{if }v\text{ is a \absm{} value}\\
\text{(Blackhole)} & \langle \Gamma \mid y \mid S \rangle \rightarrow \langle \Gamma \mid y \mid S \rangle \quad \text{if no binding for }y\text{ exists on the heap}\\
\end{array}$
\\[.2ex]
\noindent{\bf Garbage Collection and Stack Chain Removal (both optional):}\\[.2ex]
$\begin{array}{@{}ll@{}}
\text{(GC)}        & \hspace*{-7mm} \langle \Gamma,\{x_i=s_i\} \mid s \mid S \rangle \rightarrow \langle \Gamma \mid s \mid S \rangle \quad \text{where }\{x_i=s_i\}\text{ is the maximal set such that for all i:}\\
  &  \hspace*{-7mm} x_i\notin FV(\Gamma), x_i\notin FV(s), \RetApp(x_i)\notin S, \RetSeq(x_i)\notin S,\text{ and if }x_i \in FV(alts)\text{ then }\RetCase(alts)\notin S\\
\text{(SCRem)} & \langle \Gamma \mid s\mid \RetHeap(x):\RetHeap(y):S \rangle \rightarrow \langle \Gamma[x/y] \mid s[x/y] \mid \RetHeap(x):S[x/y]\rangle
\end{array}$
\\[.2ex]
\noindent{\bf Value:} A machine expression is a {\em \absm{} value} if it is an abstraction or constructor application.

\noindent{\bf WHNF:} Let $v$ be a \absm{} value. Then a machine expression is a {\em \absm{}-WHNF} if it is a \absm{} value or of the form $\letrec x_1 = e_1,\ \dots,\ x_n = e_n \lin v$.

\noindent{\bf Final State:} Let $v$ be a \absm{} value, then a final state is: \ $\langle \Gamma \mid v \mid []\rangle$
\end{minipage}}
\caption{Mark1: Initial state,  transition rules, value, WHNF and final state\label{fig:mark1}}
\end{figure*}

The rule (Update) is only applicable if (Lookup) was used before, hence (Letrec) is the only rule which is able to add completely new bindings to the heap. 

Moreover every (Lookup) triggers an (Update). 
There are situations where a variable as control expression leads to another variable as control expression (\eg variable chains in \texttt{letrec}-environments).
For example the state $\langle \Gamma \mid \True \mid \RetHeap(x):\RetHeap(y):\RetHeap(z):S\rangle$ leads to three (Update) in sequence. 
Seen as a \texttt{letrec}-environment, ${\tletrec}~ x = y, y = z, z = \True$ leads to ${\tletrec}~ x = \True, y = \True, z = \True$. But if we consider the rules in 
Fig.~\ref{fig:basred}, we see that \LRP{} does copy such values right to the needed position, without copying it to each position of the corresponding chain. 
The following example even shows that the difference in space consumption is at least $c\cdot n$, where $c$ is the size of the value $v$:
\[\letrec id = (\lambda x.x), x_1 = (id\ x_2), \dots, x_{n-1} = (id\ x_n), x_n = v \lin \texttt{seq}\ x_1\ (\texttt{T}\ x_1\ x_2\ \dots\ x_n)\]
The tuple $(\texttt{T}\ x_1\ x_2\ \dots\ x_n)$ ensures that none of the bindings can be removed by the garbage collector. 
Machine execution leads to a sequence of $n$ (Update)-transitions, where the value $v$ gets copied to each binding of the chain.
To avoid this effect,
the rule (SCRem) has to be applied whenever possible. If we consider the example above, then we have:
\[\langle \Gamma \mid \True \mid \RetHeap(x):\RetHeap(y):\RetHeap(z):S\rangle \xrightarrow{\text{(SCRem),2}} \langle \Gamma[x/y,x/z] \mid \True \mid \RetHeap(x):S[x/y,x/z]\rangle\]
The rule (SCRem) is correct, 
since $\langle \Gamma \mid v \mid \RetHeap(x):\RetHeap(y)\rangle$ corresponds to ${\tletrec}~\Gamma, x = v, y = x\ \tin\ y$ with $x \not= y$ before application, and 
after the application it is $\langle \Gamma[x/y] \mid v[y/x] \mid \RetHeap(x)\rangle$ corresponding to ${\tletrec}~\Gamma[x/y], x = v[y/x]\ \tin\ y[y/x]$ and replacing  
variables by variables 
is shown to be correct in \cite{safenoecker}.

Now we compare \LRP{} with the abstract machine:
\begin{definition} \label{def:mln}
Let $s$ be a closed machine expression such that $\langle \emptyset \mid s \mid [] \rangle \stackrel{n}{\rightarrow} Q$ where $Q$ is a final state.
\begin{enumerate}
\item $\mln(s)$ is the number of all (Subst)-, (Branch)- and (Seq)-steps in the sequence. 
\item $\mlnall(s)$ is the number of all machine steps in the sequence, thus $\mlnall(s)=n$. 
\item $\mathit{mspmax}(s)$ is $\max \{\size(St_i) \mid 1 \leq i \leq n, \neg (\mathit{St}_{i-1} = \langle\Gamma,c\vv{x},S\rangle \wedge 
\mathit{St}_{i-1} \xrightarrow{\mathrm{Update}} \mathit{St}_{i})\}$, (\ie states after (Update) are ignored for constructor applications), 
  where $\langle \emptyset \mid s \mid [] \rangle = St_1 \to St_2 \to \ldots  \to St_n = Q$.
\end{enumerate}
   If $s$ diverges then $\mln(s)=\mlnall(s)=\mathit{mspmax}(s) := \infty$.
\end{definition}

The size of a machine state is the sum of the heap sizes seen as outer \texttt{letrec}, the size of the control expression and the expressions on the stack, where the $\#$-labels are not counted. 
(Update) might increase the size of 
the current state in contrast to \LRP{}, where variables can be processed directly without looking up and then updating them (\eg compare (case-in) of \LRP{} with (Branch) 
of the Mark 1).

We show that the abstract machine can be used for computing reduction lengths and space measures
as needed for reasoning on time- and space-improvements (restricted to machine expressions in the case of space-improvements):  
\begin{theorem}[Adequacy of the abstract machine \wrt resource consumption]\label{thm:adequacy}
Let $s$ be an $\LRP$ expression with $s\conv$. 
\begin{enumerate}
  \item On input $\psi(s)$, the measure $\mln(\psi(s))$ coincides with $\rln(s)$.
\item If $s$ is a machine expression and if the abstract machine eagerly applies (GC) and (SCRem), then $\mathit{mspmax}(s)$ coincides with $\spmax(s)$.
\end{enumerate}
\end{theorem}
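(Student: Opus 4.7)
The plan is to establish a simulation between abstract-machine runs and normal-order reduction sequences, via a readback function $\phi$ that sends a state $\langle \Gamma \mid e \mid S\rangle$ to the \LRP-expression obtained by wrapping $e$ in the stack (each $\RetApp(x)$ produces an application to $x$, each $\RetSeq(x)$ a $\seq$, each $\RetCase(alts)$ a case, and each $\RetHeap(x)$ introduces an enclosing binding $x = [\cdot]$) and then in $\letrec \Gamma \lin \cdot$. For Part 1, using the previously cited identity $\rln(s)=\rln(\psi(s))$ I may reduce to $s$ being a machine expression. A one-step simulation then pairs each machine transition with a normal-order reduction: (Subst) matches (lbeta), (Branch) matches (case-c) (together with a preceding (case-in)/(case-e) when the scrutinee was reached through a variable chain), (Seq) matches (seq-c)/(seq-in)/(seq-e), while (Unwind*), (Letrec), and (Lookup)+(Update) on abstractions correspond to (lll), (llet), and (cp) respectively. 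The remaining rules (GC) and (SCRem) are not counted by either $\rln$ or $\mln$. Summing the matched transitions yields $\mln(\psi(s))=\rln(\psi(s))=\rln(s)$.

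For Part 2 I extend $\phi$ to a size correspondence. A direct calculation shows $\size(\phi(Q))=\size(Q)$ for every state $Q$ except those produced by (Update) applied to a constructor value, where the constructor is transiently held in both heap and control position---these are precisely the states excluded by Definition~\ref{def:mln}. Eager (GC) makes the machine collect garbage at exactly the points where \LRPgc would fire a (gc)-transformation; eager (SCRem) prevents the chain-duplication blow-up illustrated with the $id$-chain example in Section~\ref{sec:lrpint}, so the machine state never grows beyond its corresponding \LRPgc expression. Since the machine trajectory and the $\phi$-image nogc-reduction sequence visit configurations of identical size outside the excluded transient states, their maxima coincide: $\mathit{mspmax}(s)=\spmax(s)$. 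Theorem~\ref{thm:equivalence} then justifies that restricting to nogc-sequences is enough.

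The main obstacle will be calibrating $\phi$ across (Lookup)+(Update) on a variable chain, the one place where machine and calculus genuinely differ in space. In \LRP{} the rules (cp), (case-in)/(case-e), (seq-in)/(seq-e) deliver a value directly from its defining binding to the use site, leaving the intermediate chain intact; the machine instead walks the chain and, without (SCRem), would copy the value into every intermediate binding, inflating size by up to $O(n\cdot \size(v))$. Showing that eager (SCRem) collapses each two-frame $\RetHeap$-chunk at the earliest opportunity, and that this collapse exactly offsets the machine's copying cost, is the delicate step that validates both the size correspondence and the WHNF-reachability claim needed to match the two maxima.
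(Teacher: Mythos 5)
Your proposal is correct and follows essentially the same route as the paper: Part~1 rests on the identity $\rln(s)=\mln(\psi(s))$ (which the paper simply cites from prior work rather than re-deriving via a simulation), and for Part~2 you identify exactly the paper's key observation that, once (SCRem) eliminates variable-to-variable chains and eager (GC) mirrors the nogc-strategy, the only residual size discrepancy is the transient duplication of constructor values around (Update), which is precisely what the definition of $\mathit{mspmax}$ excludes.
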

\begin{proof}
Since in \cite{ppdp-IB55} it was shown that  $\rln(s)=\mln(\psi(s))$ holds,    
\LRP{}, restricted to machine expressions, 
and Mark-1 provide equal results concerning reduction lengths. 
Note that this does not hold for $\rlnall$ and $\mlnall$, since the abstract machine moves \texttt{letrec}-environments directly on top, while \LRP{} needs additional 
(lll)-reduction steps.

Because bindings $x = y$ are eliminated by (SCRem) the only difference between evaluating the machine expressions $s$ in \LRPgc{} and the evaluation of $s$
 on the  abstract machine with 
eagerly applying rules (GC) and (SCRem) concerning space is the following:
The abstract machine copies constructor applications in contrast to \LRP{}. 
The constructor applications are either directly processed by a (Seq) or (Branch), or the copying is a final (Update)-transition.
The claim holds, since we do not count the sizes of exactly these intermediate states between (Update) and (Seq) as well as (Update) and (Branch), and a final (Update) 
in the computation of $\mathit{mspmax}(s)$.   
\end{proof}

\subsection{Implementation}
The \LRP{} interpreter (\LRPi) is implemented in Haskell and can be downloaded here:
\begin{center}\url{http://www.ki.informatik.uni-frankfurt.de/research/lrpi}\end{center}
All details concerning compilation can be found on this page.
The interpreter is able to execute \LRP{}-programs and to generate statistics concerning reduction lengths and different space measures. 
Various size and space measures can be defined easily, thus the interpreter can be  used to compare different size and space measurements or to explore other resource usages
 apart from time and space analyses.\\
The interpreter is user friendly and is able to calculate TikZ-pictures showing the \size{}-values during runtime (for use in LaTeX).

The rule (GC) is implemented as a stop-and-copy garbage collector that is called by the abstract machine depending on the 
garbage collection mode.
If we set the garbage collector to run after each state transition,
then the reduction length and $\spmax{}$-results (restricted to \LRP-machine-expressions in the case of space measurement) are correctly counted for \LRPgc{}, 
since the interpreter automatically applies (SCRem) whenever possible.  

\subsubsection{Removing Indirection Chains}\label{para-indirections}
We support the interpretation by two initial operations: 
There is a complete garbage collection before  starting the interpretation, and  an efficient algorithm to remove chains of indirections 
(variable-variable binding chains) in the input expression, 
which avoids unnecessary space consumption in the Sestoft machine. 
The algorithm is only applied once at compile time, since none of the rules in 
Fig.~\ref{fig:mark1} create variable-to-variable bindings that cannot be removed by (SCRem). 
Since we often configure the garbage collector to run whenever possible, this can reduce the runtime of garbage 
collection runs for large programs. 
This is implemented  efficiently and runs  in time $\mathcal{O}(n \log n)$ where $n$ is the number of variables.
For more information see \cite{dallmeyer:16}. 

\ignore{ \ldots
\paragraph{Removing Indirection Chains.}
We improve the garbage collector by an efficient algorithm to remove variable chains. The algorithm is only applied once at compile time, since none of the rules in 
Fig.~\ref{fig:mark1} create variable-to-variable bindings that cannot be removed by (SCRem). Since we often configure the garbage collector to run whenever possible, this can reduce the runtime of garbage 
collection runs for large programs. We describe and analyze this algorithm in the rest of this section.

To ensure termination, we first remove cyclic variable chains, which anyway lead to divergence if evaluated. Direct cycles can be removed using a scan.
The other cycles are removed using a graph that contains variable-to-variable bindings as edges, where all corresponding bindings of strongly connected components containing at least two nodes can be removed, since direct cycles are already removed.

In the second part we use a subalgorithm {\updateChain} which processes a single variable chain and yields a map that maps each variable of the chain to the last 
variable of the chain. This algorithm is called for each binding from the first to the last in this order and the map gets passed through. 
Thus, in the end we have a mapping from variables to the end point of their chains that is used to build the final \texttt{letrec}-environment.

\begin{algo} \label{algo:elim}
        \begin{compactenum}
        \item\label{algelim1} Scan all bindings and remove every variable-to-variable binding of the form $v=v$.
        \item\label{algelim2} Convert the remaining bindings to a directed graph $G(V,E)$ where $V$ are all variables that occur in variable-to-variable bindings and $(v,v')\in E$ holds for every variable-to-variable binding $v = v'$.
        \item\label{algelim3} Calculate the strongly connected components of $G$ 
   and remove every binding whose variables are part of a strongly connected component with more than one node. This provides a map $m$ with all leftover bindings, where the left-hand side of a binding is the key and the right-hand side is the value.
        \item\label{algelim4} Go through every binding from the first to the last: Let $b = e$ be the $k$-th binding. If $k=0$ then run the subalgorithm  {\updateChain} on that binding using $b$ 
            and $m$ as parameters, otherwise if $k>0$ then use $b$ and the map from the result of subalgorithm {\updateChain}  of the $k-1$-th binding as parameters for 
            {\updateChain}.
        \item\label{algelim5} Split the result of step \eqref{algelim4}    
             into a map only containing variable-to-non-variable bindings and a map only containing variable-to-variable bindings.
        \item\label{algelim6} If all bindings were removed, then return the \tin-expression of the \tletrec-expression.\\ 
          Otherwise the final \tletrec-expression is calculated using only the mappings of the map containing variable-to-non-variable bindings as bindings,
             while the map containing variable-to-variable bindings is used to make the needed substitutions at those binding-expressions and 
          the \tin-expression of the \tletrec.
        \end{compactenum}
        \ \\
  {\bf Subalgorithm} {\updateChain}  with variable $v$ and map $m$ as input: 
  \begin{quote}  Let $vars$ be an empty list. If the lookup of $v$ on $m$ returns a variable $v'$ then add $v$ to the end of $vars$ and run $chain$ with $v'$ and $m$. 
     Otherwise add key $k$ with value $v$ to $m$ for every $k$ in $vars$ and return this map.
   \end{quote}
  \end{algo}
\begin{proposition}\label{prop:chains}
Removal of variable-to-variable chains in an expression of (syntactical) size $n$ can be done in time $\mathcal{O}(n \log n)$. \end{proposition}
\begin{proof}
We show that our algorithm meets the complexity bound given in the claim:
For all steps but the fourth it is easy to see that these steps can be done in runtime complexity $\mathcal{O}(n\log n)$. 
The fourth step can also be done with this runtime complexity: Every variable chain is processed one after each other but in fact they are not processed completely separately, because the passed through map provides the end points of chains for each so far visited variable. Because cycles are already removed we never walk through a variable chain more than once. The only points where we do work twice is when we get to an already processed binding variable: We need logarithmic time to find out that this is the end of the current chain. Because we search for the end of a chain for each binding and we have $\mathcal{O}(n)$ bindings, this sums up to $\mathcal{O}(n\log n)$ steps only for the duplicated 
work at the end of chains. But apart from the end of chains we only evaluate each variable once, which needs logarithmic time for each variable and because there are $\mathcal{O}(n)$ variables we also need $\mathcal{O}(n\log n)$ steps here.
\end{proof}

\endignore }

\section{Analyses for Examples} \label{sec:ana}
This section contains analyses illustrating the performance and output of the interpreter and tool \LRPi. 
In particular it shows several experiments:
 a simple program transformation,  various fold-applications and a fusion,  comparing two list-reverse variants, 
 and sharing vs. non-sharing.  
The latter is illustrated by an example that can be seen as a variant of common subexpression elimination which shows that saving space may 
increase the runtime
and that a transformation, which for a large class of tests reduces the space, might fail to be a  space improvement in some cases. All analyses are done after translating the input to machine expression format.

One of the aims of \LRPi{} is to support conjectures of space improvements by affirmative tests, or to refute the space improvement property
of a specific transformation by finding a counter example. Since  \LRPi{} only tests in the empty environment, a complete test would require 
to perform the test also within contexts, which, however, cannot be done completely, since there are infinitely many, even using 
context lemmas to minimize the set of necessary contexts. Using a  simulation mode, the contexts could be restricted to testing the functions 
on arguments. For these tests typing makes a big difference, since certain transformations are correct only if typing is 
respected
and also the space improvement property may depend on the restriction to typed arguments or type-correct insertion into contexts. \\
Examples for conjectured space improvements are the reductions of the calculus (see Fig.\ref{fig:basred}) used as transformations, 
with the exception of the (cp)-reductions.

The used function definitions can be found in Fig.~\ref{fig:code}. 
The \texttt{fold}-function definitions are taken from \cite{ifl16}, \texttt{concat} and \texttt{concatMap} are inlined versions of the definitions 
in \cite{hasBase}. We first consider \texttt{fold}-functions of Haskell. \texttt{foldr} is the usual right-fold, \texttt{foldl} 
the usual left-fold and \texttt{foldl'} a more strict variant of \texttt{foldl}, which is not completely strict, since the used \texttt{seq} only 
evaluates $w$ until a value is achieved.
\begin{figure*}[!htb]
\begin{tabular}{@{}|@{}c@{\,}|@{}c@{}|}
\hline 
\begin{minipage}{.485\textwidth}
\[\begin{array}{@{\ }l@{~}l@{}}
\texttt{comp} &= \lambda f,g.(\lambda x.f\ (g\ x))\\
\foldr &= \lambda f,z,xs.\case\ xs \of \{ ([] \casepf z)\\
  & \quad\ \  ((y:ys) \casepf f\ y\ (\foldr\ f\ z\ ys)) \}\\
\foldl &= \lambda f,z,xs.\case\ xs \of \{ ([] \casepf z)\\
  & \quad\ \  ((y:ys) \casepf \foldl\ f\ (f\ z\ y)\ ys) \}\\
\foldls &= \lambda f,z,xs.\case\ xs \of \{ ([] \casepf z)\\
  & \quad\ \ ((y:ys) \casepf\\
  &\quad\ \ \quad\letrec w = (f\ z\ y)\\
  &\quad\ \ \ \quad\ \lin \seq w\ (\foldls\ f\ w\ ys))\}\\   
\texttt{map} &= \lambda f,lst.\case\ lst \of \{ ([] \casepf [])\\
    &\quad\; ((x:xs) \casepf ((f\ x):(\texttt{map}\ f\ xs)))\}\\
\texttt{tail} &= \lambda lst.\case\ lst\of \{\\
    &\quad\ \ \quad\:\ ([] \casepf \bot)\ ((x:xs) \casepf xs) \}\\
\texttt{replicate} &= \lambda n,x.\case\ n \of \{(\Zero \casepf [])\\
  &   ((\Succ\ m) \casepf x:(\texttt{replicate}\ m\ x))\}\\
\texttt{last} &= \lambda lst.\case\ lst \of \{(x:xs) \casepf\\
  & \quad\ \  \case\ xs \of \{ ([] \casepf x)\\
  & \quad\ \ \qquad\qquad~((y:ys) \casepf \texttt{last}\ xs)\}\}\\
 & \\
\end{array}\]
\end{minipage}
&
\begin{minipage}{.49\textwidth}
\vspace*{-0.35cm}\[\begin{array}{@{}l@{~}l@{}}
\reverse &= \lambda xs.\case\ xs \of \{ ([] \casepf [])\\
  & \quad\ \ ((y:ys) \casepf \reverse\ ys\ \texttt{++}\ [y])\}\\
\reverselin &= \lambda xs.\reverselinw\ []\ xs\\
\reverselinw &= \lambda xs,ys.\case\ ys \of \{ ([] \casepf xs)\\
  & \quad ((z:zs) \casepf \reverselinw\ (z:xs)\ zs)\}\\
\texttt{(++)} &= \lambda xs,ys.\case\ xs \of \{ ([] \casepf ys)\\
  & \quad\ \  ((z:zs) \casepf z:(zs\ \texttt{++}\ ys))\}\\
\texttt{concat} &= \lambda xs.(\texttt{foldr}\\
    & \quad\  (\lambda x,y.\texttt{foldr}\ (\lambda z,zs.(z:zs))\ y\ x)\\
    & \quad\ \qquad\qquad\quad\ \ []\ xs)\\
\texttt{concatMap} &= \lambda f,xs.(\texttt{foldr}\\
    & \quad\ \ \qquad\ \ \ (\lambda x,b.\texttt{foldr}\\
    & \quad\ \ \qquad\quad \ \ (\lambda z,zs.(z:zs))\ b\ (f\ x))\\
    & \quad\ \ \qquad\quad []\ xs)\\
\texttt{xor} &= \lambda x,y.\case\ x \of \{\\
    & \quad\ \ \qquad (\True \casepf \case\ y \of \{\\
    & \quad\ \ \qquad\qquad\quad( \True \casepf \False )\\
    & \quad\ \ \qquad\qquad\quad( \False \casepf \True )\})\\
    & \quad\ \ \qquad( \False \casepf y) \}\\
\end{array}\]
\end{minipage}
\\\hline
\end{tabular}
\caption{Several function definitions\label{fig:code}}
\end{figure*}
\begin{figure}[thb]
\[\arraycolsep=7pt
\begin{array}{l|rrrrrrrrrr}
k & 25 & 50 & 75 & 100 & 125 & 150 & 175 & 200 & 225 & 250\\
\hline
 & \multicolumn{10}{c}{\foldl{}\text{ using }\texttt{xor}}\\
\hline
\mln     & 302 & 602 & 902 & 1202 & 1502 & 1802 & 2102 & 2402 & 2702 & 3002\\
\mlnall  & 1085 & 2160 & 3235 & 4310 & 5385 & 6460 & 7535 & 8610 & 9685 & 10760\\
\spmax{} & 217 & 417 & 617 & 817 & 1017 & 1217 & 1417 & 1617 & 1817 & 2017\\
\hline
 & \multicolumn{10}{c}{\foldls{}\text{ using }\texttt{xor}}\\
\hline
\mln     & 327 & 652 & 977 & 1302 & 1627 & 1952 & 2277 & 2602 & 2927 & 3252\\
\mlnall  & 1235 & 2460 & 3685 & 4910 & 6135 & 7360 & 8585 & 9810 & 11035 & 12260\\
\spmax{} & 87 & 112 & 137 & 162 & 187 & 212 & 237 & 262 & 287 & 312\\
\hline
 & \multicolumn{10}{c}{\foldr{}\text{ using }\texttt{xor}}\\
\hline
\mln     & 279 & 554 & 829 & 1104 & 1379 & 1654 & 1929 & 2204 & 2479 & 2754\\
\mlnall  & 1016 & 2016 & 3016 & 4016 & 5016 & 6016 & 7016 & 8016 & 9016 & 10016\\
\spmax{} & 90 & 115 & 140 & 165 & 190 & 215 & 240 & 265 & 290 & 315
\end{array}\]
 \caption{Table of analysis results for different \texttt{fold}-variants}\label{fig:foldl-results}
 \end{figure} 
Following \cite{ifl16},  we use the \LRPi{} to find an example in which \foldl{} is worse than \foldr{} if the preconditions on arguments
 are not fulfilled. 
Choosing  \texttt{xor} for $f$ and \texttt{False} as $e$,
the requirement $f\ e\ s\preceq f\ s\ e$ holds,
but the requirement $(f\ (f\ s_1\ s_2)\ s_3) \preceq (f\ s_1\ (f\ s_2\ s_3))$ is not fulfilled for $s_1=\True$, $s_2=\False,s_3=\False$. 
A list starting with a single \True{} element followed by $k-1$ \False{}-elements generated using a take-function/list generator approach 
(using a Peano encoding to represent the numbers) is used as input list.

We configure \LRPi{} to collect garbage whenever possible. As we will see,  
\foldr{} indeed has a better runtime behavior than \foldl{} and the space consumption 
of \foldr{} and \foldls{} are almost equal. Moreover, we see that \texttt{foldl} has a much worse space behavior than \texttt{foldl'}. 
This difference is caused by the known 
stack problems of \foldl{} that can be solved in the case of \texttt{xor} by using \texttt{foldl'} instead.

We can identify the stack overflow problem (of fold-expressions) in the space diagram in Fig.~\ref{fig:fold-results-2} using $k=250$, directly calculated by  \LRPi{}. Let $s_i$ be the $i$-th expression during execution. 
Because of  lazy evaluation, the \foldl{}-expression is expanded step by step
 without calculating any intermediate results until \foldl{} itself is no longer required and is removed by the garbage collector. 
 This leaves a long chain of nested \texttt{(++)}-function calls that lead to the big rise of the curve, 
 because this causes a long chain of (lbeta)- and (case)-transformations. The small decrease before the rise of the curve is caused by the removal 
 of \foldl{} by the garbage collector, because the definition of \foldl{} is not needed anymore after the expansion is completed. 
 Note that (gc)-reductions are not counted by \mlnall{}, but counted in the following diagrams in Fig. \ref{fig:fold-results-2}.
\begin{figure}[htb]
\begin{center}\begin{tikzpicture}[>=stealth]
\begin{axis}[
  width=0.75\textwidth,
  height=5cm,
  axis x line=center, axis y line=center,
  xtick=\empty,
  ytick=\empty,
  extra x ticks={7,13,19,25},
  extra x tick labels={2369,4737,7106,9474},
  extra y ticks={2,4,6,8,10,12},
  extra y tick labels={288,576,864,1153,1441,1729},
  xlabel={$i$}, ylabel={$\size(s_i)$}, xlabel style={right}, ylabel style={above}, xmin=1, xmax=27, ymin=0, ymax=14, grid=major, samples=50,
  after end axis/.code={\path (axis description cs:0,0) node [anchor=north west,yshift=-0.075cm,xshift=-0.2cm] {1} node [anchor=south east,xshift=-0.075cm,yshift=-0.2cm] {0};}
]
  \addplot[mark=none] coordinates { (1.0,2.1) (1.1,2.1) (1.2,2.1) (1.3,2.1) (1.4,2.1) (1.5,2.1) (1.6,2.2) (1.7,2.2) (1.8,2.2) (1.9,2.2) (2.0,2.2) (2.1,2.2) (2.2,2.2) (2.3,2.2) (2.4,2.2) (2.5,2.2) (2.6,2.3) (2.7,2.3) (2.8,2.3) (2.9,2.3) (3.0,2.3) (3.1,2.3) (3.2,2.3) (3.3,2.3) (3.4,2.3) (3.5,2.3) (3.6,2.4) (3.7,2.4) (3.8,2.4) (3.9,2.4) (4.0,2.4) (4.1,2.4) (4.2,2.4) (4.3,2.4) (4.4,2.4) (4.5,2.4) (4.6,2.5) (4.7,2.5) (4.8,2.5) (4.9,2.5) (5.0,2.5) (5.1,2.5) (5.2,2.5) (5.3,2.5) (5.4,2.5) (5.5,2.5) (5.6,2.6) (5.7,2.5) (5.8,2.6) (5.9,2.6) (6.0,2.6) (6.1,2.6) (6.2,2.6) (6.3,2.6) (6.4,2.6) (6.5,2.6) (6.6,2.6) (6.7,2.7) (6.8,2.7) (6.9,2.7) (7.0,2.7) (7.1,2.7) (7.2,2.7) (7.3,2.7) (7.4,2.7) (7.5,2.7) (7.6,2.7) (7.7,2.7) (7.8,2.8) (7.9,2.8) (8.0,2.8) (8.1,2.8) (8.2,2.8) (8.3,2.8) (8.4,2.8) (8.5,2.8) (8.6,2.8) (8.7,2.9) (8.8,2.9) (8.9,2.9) (9.0,2.9) (9.1,2.9) (9.2,2.9) (9.3,2.9) (9.4,2.9) (9.5,2.9) (9.6,2.9) (9.7,3.0) (9.8,3.0) (9.9,3.0) (10.0,3.0) (10.1,3.0) (10.2,3.0) (10.3,3.0) (10.4,3.0) (10.5,3.0) (10.6,3.0) (10.7,3.1) (10.8,3.0) (10.9,3.1) (11.0,3.1) (11.1,3.1) (11.2,3.1) (11.3,3.1) (11.4,3.1) (11.5,3.1) (11.6,3.1) (11.7,3.1) (11.8,3.1) (11.9,3.2) (12.0,3.2) (12.1,3.2) (12.2,3.2) (12.3,3.2) (12.4,3.2) (12.5,3.2) (12.6,3.2) (12.7,3.2) (12.8,3.2) (12.9,3.3) (13.0,3.3) (13.1,3.3) (13.2,3.3) (13.3,3.3) (13.4,3.3) (13.5,3.3) (13.6,3.3) (13.7,3.3) (13.8,3.3) (13.9,3.4) (14.0,3.4) (14.1,3.4) (14.2,3.4) (14.3,3.4) (14.4,3.4) (14.5,3.4) (14.6,3.4) (14.7,3.4) (14.8,3.5) (14.9,3.5) (15.0,3.5) (15.1,3.5) (15.2,3.5) (15.3,3.5) (15.4,3.5) (15.5,3.5) (15.6,3.5) (15.7,3.5) (15.8,3.6) (15.9,3.5) (16.0,3.6) (16.1,3.6) (16.2,3.6) (16.3,3.6) (16.4,3.6) (16.5,3.6) (16.6,3.6) (16.7,3.6) (16.8,3.7) (16.9,3.6) (17.0,3.7) (17.1,3.7) (17.2,3.7) (17.3,3.7) (17.4,3.7) (17.5,3.7) (17.6,3.7) (17.7,3.7) (17.8,3.7) (17.9,3.7) (18.0,3.8) (18.1,3.8) (18.2,3.8) (18.3,3.8) (18.4,3.8) (18.5,3.8) (18.6,3.8) (18.7,3.8) (18.8,3.7) (18.9,3.9) (19.0,4.1) (19.1,4.3) (19.2,4.5) (19.3,4.7) (19.4,4.9) (19.5,5.1) (19.6,5.3) (19.7,5.5) (19.8,5.7) (19.9,5.9) (20.0,6.1) (20.1,6.3) (20.2,6.5) (20.3,6.7) (20.4,6.9) (20.5,7.2) (20.6,7.5) (20.7,7.7) (20.8,7.9) (20.9,8.1) (21.0,8.3) (21.1,8.5) (21.2,8.7) (21.3,8.9) (21.4,9.1) (21.5,9.3) (21.6,9.5) (21.7,9.7) (21.8,9.9) (21.9,10.1) (22.0,10.3) (22.1,10.5) (22.2,10.7) (22.3,10.9) (22.4,11.2) (22.5,11.4) (22.6,11.6) (22.7,11.8) (22.8,12.0) (22.9,12.2) (23.0,12.4) (23.1,12.6) (23.2,12.8) (23.3,13.0) (23.4,13.2) (23.5,13.4) (23.6,13.6) (23.7,13.8) (23.8,14.0) (23.9,13.7) (24.0,13.3) (24.1,12.8) (24.2,12.4) (24.3,11.9) (24.4,11.5) (24.5,11.1) (24.6,10.6) (24.7,10.2) (24.8,9.7) (24.9,9.3) (25.0,8.9) (25.1,8.4) (25.2,8.0) (25.3,7.6) (25.4,7.1) (25.5,6.6) (25.6,6.1) (25.7,5.7) (25.8,5.3) (25.9,4.8) (26.0,4.4) (26.1,4.0) (26.2,3.5) (26.3,3.1) (26.4,2.6) (26.5,2.2) (26.6,1.8) (26.7,1.3) (26.8,0.9) (26.9,0.4) (27.0,1.0e-2) };
\end{axis}
\end{tikzpicture}\end{center}
\caption{Size diagram for \texttt{foldl} using \texttt{xor} and input size $k=250$} \label{fig:fold-results-2}
\end{figure}
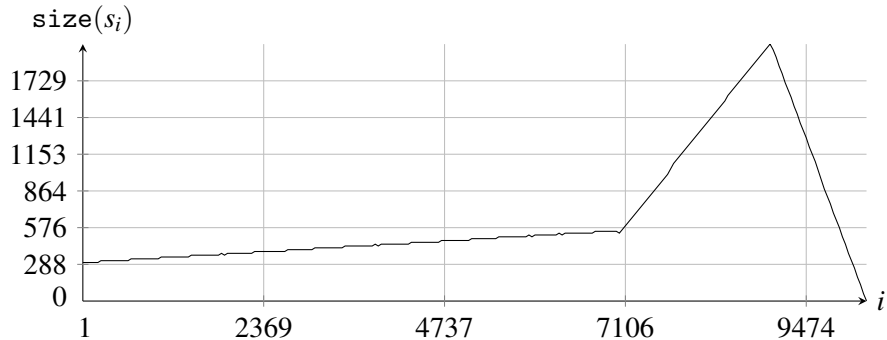

\begin{figure}[htb]
\[\arraycolsep=8pt
\begin{array}{l|rrrrrrrr}
k & 50 & 100 & 150 & 200 & 250 & 300 & 350 & 400\\
\hline
 & \multicolumn{8}{c}{\texttt{last (reverse (replicate }k\texttt{ True))}}\\
\hline
\mln     & 4230 & 15955 & 35180 & 61905 & 96130 & 137855 & 187080 & 243805\\
\mlnall  & 15799 & 59074 & 129849 & 228124 & 353899 & 507174 & 687949 & 896224\\
\spmax{} & 462 & 862 & 1262 & 1662 & 2062 & 2462 & 2862 & 3262\\
\hline
 & \multicolumn{8}{c}{\texttt{last (reverse' (replicate }k\texttt{ True))}}\\
\hline
\mln     & 457 & 907 & 1357 & 1807 & 2257 & 2707 & 3157 & 3607\\
\mlnall  & 1782 & 3532 & 5282 & 7032 & 8782 & 10532 & 12282 & 14032\\
\spmax{} & 100 & 150 & 200 & 250 & 300 & 350 & 400 & 450\\
\end{array}\]
\caption{Comparing two reverse variants} \label{fig:reverse-experiments}
\end{figure}

We now want to compare $\reverse{}$ with $\reverselin{}$    
in Fig \ref{fig:reverse-experiments}. We use \texttt{last} to force the evaluation and moreover we create a list
 containing $k$ times the element \True{} using \texttt{replicate }$k$\texttt{ True}.
This supports the following conjectures on  complexities: $\reverse{}$ requires quadratic runtime, caused by the left-associativity of \texttt{(++)} 
while $\reverselin{}$ 
requires linear runtime. Because \texttt{(++)} only goes through each intermediate list, \reverse{} 
appears to not need asymptotically more space than $\reverselin{}$. Both $\reverse{}$ and $\reverselin{}$ 
appear to have a linear space complexity, perhaps $\reverselin{}$ has smaller constants in the asymptotic complexity formula.

Now we want to have a short look on fusion. The composition of functions can lead to well readable programs, because recursions are hidden and the main steps of the calculation
 are clearly visible. But this leads to intermediate structures and to an increase of the reduction length and especially space consumption, 
 if we use a realistic (non-eager) garbage collector.
 The Glasgow Haskell Compiler (GHC) uses the so called short cut fusion as introduced in \cite{gill}. This approach eliminates such intermediate tree and list structures to gain a better runtime and to reduce the needed space.

As shown in \cite{johann}, short cut fusion might be unsafe if \texttt{seq} is used,  but in the majority of cases  this approach works and is used by the GHC. 
Moreover \cite{svenningsson} shows that this approach might increase sharing and therefore a part of the memory is longer used. Thus it may increase the space consumption. 

We now want to compare \texttt{(comp concat map) tail} with \texttt{concatMap tail}. As input we use a list containing $k$ inner lists of the form \texttt{[True,True]}, 
again generated by a list-generator/take-function approach. 
The   differences in the table are the unfused version minus the fused version. The results are in Fig.~\ref{fig:fusion-results}.
\begin{figure}
\[\arraycolsep=7pt\begin{array}{l|rrrrrrrrrr}
k & 100 & 200 & 300 & 400 & 500 & 600 & 700 & 800 & 900 & 1000\\
\hline
 & \multicolumn{10}{c}{\text{Difference of reduction lengths between fused and unfused}}\\
\hline
\Delta\ \mln{} & 206 & 406 & 606 & 806 & 1006 & 1206 & 1406 & 1606 & 1806 & 2006\\
\Delta\ \mlnall{} & 623 & 1223 & 1823 & 2423 & 3023 & 3623 & 4223 & 4823 & 5423 & 6023\\
\hline
 & \multicolumn{10}{c}{\text{Difference of }\spmax{}\text{ between fused and unfused}}\\
\hline
\Delta\ \text{Eager}               & 14 & 14 & 14 & 14 & 14 & 14 & 14 & 14 & 14 & 14\\
\Delta\ \text{Every }1000\text{th} & 47 & 47 & 47 & 47 & 47 & 47 & 47 & 47 & 47 & 47\\
\Delta\ \text{Every }2000\text{th} & 60 & 60 & 60 & 60 & 60 & 60 & 60 & 60 & 60 & 60\\
\Delta\ \text{Never}               & 132 & 232 & 332 & 432 & 532 & 632 & 732 & 832 & 932 & 1032
\end{array}\]
\caption{Differences in time and space between fused and unfused \texttt{concatMap}} \label{fig:fusion-results}
\end{figure}
As expected the reduction length and space consumption behaves linearly in all cases. We also see that the frequency of the garbage collector directly affects the space consumption, 
if we compare each garbage collection mode of the fused with the unfused version. 
The rarer the garbage collector runs the higher is the difference in space consumption: If we turn off the garbage collector and use the fused version instead of the unfused version, then the decrease of space consumption is linear in the length of the list.

With regard to $nogc$ the advantage concerning space consumption of the fused versions over the unfused versions of the above examples is only constant, but the advantage is even linear if we turn off garbage collection. 
Thus the above examples for fusion are space improvements in a weak sense. 
Practically, the weak space improvements above are very useful because they are also time improvements.   

\begin{figure}
\[\arraycolsep=8pt
\begin{array}{l|rrr|rrrrr}
k & 12 & 13 & 14 & 200 & 400 & 600 & 800 & 1000\\
\hline
 & \multicolumn{8}{c}{\text{Shared append}}\\
\hline
\mln     & 297 & 321 & 345 & 4809 & 9609 & 14409 & 19209 & 24009\\
\mlnall  & 1152 & 1245 & 1338 & 18636 & 37236 & 55836 & 74436 & 93036\\
\spmax{} & 77 & 79 & 81 & 453 & 853 & 1253 & 1653 & 2053\\
\hline
 & \multicolumn{8}{c}{\text{Unshared append}}\\
\hline
\mln     & 453 & 489 & 525 & 7221 & 14421 & 21621 & 28821 & 36021\\
\mlnall  & 1730 & 1867 & 2004 & 27486 & 54886 & 82286 & 109686 & 137086\\
\spmax{} & 78 & 79 & 80 & 266 & 466 & 666 & 866 & 1066\\
\end{array}\]
\caption{Shared versus unshared append} \label{fig:append-results}
\end{figure}

The final example is a case where the decrease of space consumption behaves inverse to time consumption.
The example experiments in Fig. \ref{fig:append-results} reports on 
comparing   $(list~\texttt{++}~list)$ ~$\texttt{++}$~ $(list~\texttt{++} ~list)$    
with  
${\tt let}~xs =  list$ ~
 ${\tt in}$ ~$(xs~\texttt{++}~xs)~\texttt{++}$~ $(xs~ \texttt{++}~ xs)$ (written here in Haskell notation), 
 driving evaluation using the ${\tt last}$ function, 
 and where \texttt{++} is the the ${\tt append}$ function.
The first expression has four separate occurrences of a (long) ${list}$, whereas the second
 expression shares the $list$s, where ${list}$  varies in length in the experiments. 
 The results are consistent with the claim that common subexpression elimination (cse) is a time improvement \cite{schmidt-schauss-sabel-PPDP:2015}, 
 and show that (cse) and an increase of sharing in general
 may increase the (maximal) space usage. In neither direction the example is a space improvement, which shows that (cse) is not a space improvement.

\section{Conclusion and Future Work} \label{sec:conc}
We demonstrated that the  interpreter \LRPi{} is a useful tool for exploring improvements.
The conceptual work on it also had an influence on constructing appropriate models of resource consumption.
Among the influences are: the calculus must incorporate  (gc), and  the Sestoft machine turned out to have a non-optimal space behavior,
which had to be improved.
 We expect that in the future there will be more influences and feedback in both directions between measuring tool with its
 experiments and the theory.

Future research into the relations between calculus, machine translations and  abstract machine is justified.
Further work is to extend \LRPi{}  also taking contexts (according to Def. \ref{def:spmax-improvement}) into account, 
or automating the inspection of series of arguments, 
in order to improve the  affirmative power for space improvements. Moreover, a more practical integer representation would be helpful, 
since Peano encodings affect and pollute the space measurement.
Also refining the garbage collection (for example locally generated garbage) is an issue.

\subsection*{Acknowledgments}  We thank David Sabel for discussions and hints which were very helpful in improving the paper.
We also thank the reviewers of WPTE for the numerous helpful remarks. 
 \bibliographystyle{eptcs}

\end{document}